\font\Goth=yinitas scaled \magstep0
\newcommand{\Gth}[1]{\lower2mm\hbox{\Goth #1}}
\def\de{\delta}
\def\l1{{\lambda}_1}
\newcommand{\f}{\frac}
\def\x1{{\xi }_{xx}}
\def\x2{{\xi }_{yy}}
\def\x3{{\xi }_{xy}}
\def\e1{{\eta }_{xx}}
\def\e2{{\eta }_{yy}}
\def\e3{{\eta }_{xy}}
\newcommand{\ds}{\displaystyle }
\newtheorem{theorem}{Theorem}
\newtheorem{corollary}{Corollary}
\newcommand{\beqn}{\begin{eqnarray*}}
\newcommand{\eeqn}{\end{eqnarray*}}
\newcommand{\beqnn}{\begin{eqnarray}}
\newcommand{\eeqnn}{\end{eqnarray}}
\newcommand{\p}{\partial}
\newcommand{\bb}{\begin{equation}}
\newcommand{\ee}{\end{equation}}
\newcommand{\ba}{\begin{array}}
\newcommand{\ea}{\end{array}}
\begin{document}
\pagenumbering{arabic}
\title{\huge \bf Note on self-adjoint sub-classes of fourth-order evolution equations with time dependent coefficients}
\author{\rm \large Igor Leite Freire \\
\\
\it Centro de Matemática, Computação e Cognição\\ \it Universidade Federal do ABC - UFABC\\ \it 
Rua Catequese, $242$,
Jardim,
$09090-400$\\\it Santo André, SP - Brasil\\
\rm E-mail: igor.freire@ufabc.edu.br\\}
\date{\ }
\maketitle
\vspace{1cm}
\begin{abstract}
The self-adjoint sub-classes of nonlinear evolution equations of fourth-order with time dependent coefficients are determined, generalizing some recent results. Using the new conservation theorem recently proved by Nail Ibragimov some conservation laws for time dependent equations are established illustrating the main results.
\end{abstract}
\vskip 1cm
\begin{center}
{2000 AMS Mathematics Classification numbers:\vspace{0.2cm}\\
76M60, 58J70, 35A30, 70G65\vspace{0.2cm} \\
Key words: Evolution equations with time dependent coefficients, self-adjoint equations, conservation laws}
\end{center}
\pagenumbering{arabic}
\newpage

\section{Introduction}

Since the equation
\bb\label{1.1}
u_{t}+f(u)u_{xxxx}+g(u)u_{x}u_{xxx}+h(u)u_{xx}^{2}+d(u)u_{x}^{2}u_{xx}-p(u)u_{xx}-q(u)u_{x}^{2}=0
\ee
was considered in \cite{qu}, some works have been dedicated to study the symmetry properties and conservation laws of (\ref{1.1}), see \cite{ib1,i1,ga}. 

Equation (\ref{1.1}) encapsulates a wide list of important equations arising from the mathematical physics and mathematical biology, see \cite{ib1,qu,i1}. 

Self-adjoint sub-classes of (\ref{1.1}) were determined in \cite{ib1}. Following Ibragimov \cite{ib3}, a self-adjoint equation has the remarkable property that the nonlocal, or the non-physical, variable $v$ can be eliminated. Then, if there are conservation laws for the equation and its corresponding adjoint equation, after substituting $v=u$, it is possible to establish conservation laws for the original equation. 

For the definitions of adjoint and self-adjoint equations, the author strongly recomend to the diligent reader to read the Ibragimov's works \cite{ib2,ib3,ib4}.

The results on self-adjointness condition of (\ref{1.1}) was generalized in \cite{i1} to the equation
\bb\label{1.2}
\ba{l}
u_{t}+f(u)u_{xxxx}+g(u)u_{x}u_{xxx}+r(u)u_{xxx}+h(u)u_{xx}^{2}\\
\\
+d(u)u_{x}^{2}u_{xx}+p(u)u_{xx}+q(u)u_{x}^{2}+a(u)u_{x}+b(u)=0.
\ea
\ee

In this letter we show that the self-adjoint conditions of the equation (\ref{1.2}) is quite similar to those found in \cite{ib1} and \cite{i1} to the equations (\ref{1.1}) and (\ref{1.2}), respectively. In fact, the aim of this note is the following statement.

\begin{theorem}\label{teo1}
Equation 
\bb\label{1.3}
\ba{l}
u_{t}+f(t,u)u_{xxxx}+g(t,u)u_{x}u_{xxx}+r(t,u)u_{xxx}+h(t,u)u_{xx}^{2}\\
\\
+d(t,u)u_{x}^{2}u_{xx}+p(t,u)u_{xx}+q(t,u)u_{x}^{2}+a(t,u)u_{x}+b(t,u)=0
\ea
\ee
is self-adjoint if and only if
\bb\label{1.4}
g=h+\f{1}{u}(uf)_{u},\,\,\,\,\,\,d=\f{c_{1}(t)}{u}+\f{1}{u}(uh)_{u},\,\,\,\,\,\,q=\f{1}{u}(up)_{u},\,\,\,\,\,\,r=c_{2}(t)+\f{c_{3}(t)}{u},\,\,\,\,\,\,b=\f{c_{4}(t)}{u},
\ee
where  $c_{1}(t),\cdots,c_{4}(t)$ are arbitrary smooth functions. 
\end{theorem}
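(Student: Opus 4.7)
The plan is to apply Ibragimov's definition of self-adjointness directly to equation (\ref{1.3}). Writing the left-hand side of (\ref{1.3}) as $F[u]$, I introduce the formal Lagrangian $\mathcal{L}=v F[u]$ where $v=v(t,x)$ is a nonlocal variable, compute the formal adjoint
\[
F^{*}=\frac{\delta\mathcal{L}}{\delta u}=\frac{\partial\mathcal{L}}{\partial u}-D_{t}\frac{\partial\mathcal{L}}{\partial u_{t}}-D_{x}\frac{\partial\mathcal{L}}{\partial u_{x}}+D_{x}^{2}\frac{\partial\mathcal{L}}{\partial u_{xx}}-D_{x}^{3}\frac{\partial\mathcal{L}}{\partial u_{xxx}}+D_{x}^{4}\frac{\partial\mathcal{L}}{\partial u_{xxxx}},
\]
and impose the defining identity $F^{*}|_{v=u}=\lambda\,F$ for some multiplier $\lambda$. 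Proving the theorem reduces to determining the conditions on $f,g,h,r,d,p,q,a,b$ that make this identity hold, and then reading off (\ref{1.4}).

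First I would compute each of the six Euler--Lagrange contributions by expanding $D_{x}^{k}(v\,\phi(t,u))$ via Leibniz and the chain rule, keeping $v$ symbolic throughout. The leading term $D_{x}^{4}(vf)$ produces monomials in the jet variables $u_{xxxx},\,u_{x}u_{xxx},\,u_{xx}^{2},\,u_{x}^{2}u_{xx}$ (and lower), each weighted by $v$-derivatives and by $u$-derivatives of $f$; analogous expansions handle the contributions of $g,r,h,d,p,q,a,b$. After substitution $v=u$, every term of $F^{*}|_{v=u}$ becomes a polynomial in $u_{t},u_{x},u_{xx},u_{xxx},u_{xxxx}$ whose coefficients depend only on $t$ and $u$, and this polynomial must equal $\lambda F$.

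Matching the coefficient of $u_{t}$ forces $\lambda=-1$, since $u_{t}$ appears in $F^{*}|_{v=u}$ only through $-D_{t}\partial_{u_{t}}\mathcal{L}=-v_{t}\mapsto -u_{t}$, while in $F$ its coefficient is $1$. With $\lambda=-1$ fixed, matching the remaining monomials in turn -- $u_{xxxx}$, $u_{x}u_{xxx}$, $u_{xxx}$, $u_{xx}^{2}$, $u_{x}^{2}u_{xx}$, $u_{xx}$, $u_{x}^{2}$, $u_{x}$, and the constant term -- yields a hierarchy of ODEs in $u$ (with $t$ as a parameter) for the coefficient functions. Integrating these ODEs produces the four relations displayed in (\ref{1.4}); the constants of integration with respect to $u$ are exactly the functions $c_{1}(t),\ldots,c_{4}(t)$. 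The equations arising from the $u_{x}$ and constant monomials, as well as those carrying $f_{t},h_{t},\ldots$, reduce to identities once (\ref{1.4}) holds, which gives sufficiency.

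The main technical obstacle is purely combinatorial: the Leibniz expansions of $D_{x}^{4}(vf)$ and $D_{x}^{3}(v(gu_{x}+r))$ generate many cross-terms in which derivatives act partly on $v$ and partly on the coefficient functions through $u$, and careful bookkeeping is needed to check that the coefficient of $u_{xxx}$ really does force $r$ to split as $c_{2}(t)+c_{3}(t)/u$, and that the coefficient of the constant term yields $b=c_{4}(t)/u$. The structural reason the answer is so clean is that, once $v=u$ is substituted, the total $x$-derivatives reassemble into the original operator $F$ precisely when the $u$-dependence of $f,h,p$ organizes into the logarithmic-type combinations $(uf)_{u}/u$, $(uh)_{u}/u$, $(up)_{u}/u$ appearing in (\ref{1.4}); this is the same mechanism already observed in \cite{ib1,i1} for the autonomous versions (\ref{1.1})--(\ref{1.2}), the only new feature being that the integration constants get promoted to arbitrary smooth functions of $t$.
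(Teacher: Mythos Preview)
Your approach is essentially identical to the paper's: form the formal Lagrangian $vF$, apply the Euler--Lagrange operator to obtain $F^{*}$, impose $F^{*}|_{v=u}=\phi F$, read off $\phi=-1$ from the $u_{t}$ coefficient, and then split on the remaining differential monomials to obtain exactly the system $(uf)_{u}-ug+uh=0$, $(up)_{u}-uq=0$, $3(uf)_{uuu}-3(ug)_{uu}+2(ud)_{u}+(uh)_{uu}=0$, $(ur)_{uu}=0$, $(ub)_{u}=0$, whose solution is (\ref{1.4}). One small slip: no terms involving $f_{t},h_{t},\ldots$ ever appear, since the only $t$-derivative in the Euler--Lagrange operator acts on $\partial\mathcal{L}/\partial u_{t}=v$ and produces just $-v_{t}$; you can drop that remark.
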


\begin{proof}
Let 
\bb\label{1.5}
\ba{l}
F:=u_{t}+f(t,u)u_{xxxx}+g(t,u)u_{x}u_{xxx}+r(t,u)u_{xxx}+h(t,u)u_{xx}^{2}\\
\\
+d(t,u)u_{x}^{2}u_{xx}+p(t,u)u_{xx}+q(t,u)u_{x}^{2}+a(t,u)u_{x}+b(t,u)
\ea
\ee
and
\bb\label{1.6}
\ba{lcl}
F^{\ast}&=&\ds{\f{\de}{\de u}[v(u_{t}+fu_{xxxx}+gu_{x}u_{xxx}+ru_{xxx}+hu_{xx}^{2}+du_{x}^{2}u_{xx}+pu_{xx}+qu_{x}^{2}+au_{x}+b)]}\\
\\
&&\ds{=-D_{t}(v)+D_{x}^{4}(vf)-D_{x}^{3}(vgu_{x}+vr)+D_{x}^{2}(2hvu_{xx}+dvu_{x}^{2}+pv)}\\
\\
&&\ds{-D_{x}(gvu_{xxx}+2dvu_{x}u_{xx}+2qvu_{x}+av)+f_{u}vu_{xxxx}+g_{u}vu_{x}u_{xxx}+r_{u}u_{xxx}+h_{u}vu_{xx}^{2}}\\
\\
&&\ds{+d_{u}u_{x}^{2}u_{xx}+p_{u}vu_{xx}+q_{u}vu_{x}^{2}+va_{u}u_{x}+vb_{u},}
\ea
\ee
where 
$$
\f{\de}{\de u}=\f{\p}{\p u}+\sum_{j=1}^{\infty}(-1)^{j}D_{i_{1}}\cdots D_{i_{j}}\f{\p}{\p u_{i_{1}\cdots i_{j}}},
$$
is the Euler-Lagrange operator.

Equation $F^{\ast}=0$ is called adjoint equation to $F=0$ (see the definition in \cite{ib1,ijnmp,i1,ib2,ib3}). Equation $F=0$ is self-adjoint if and only if 
\bb\label{1.7}
\left.F^{\ast}\right|_{v=u}=\phi F,
\ee
for some differential function $\phi$. Substituting equations (\ref{1.5}) and (\ref{1.6}) into (\ref{1.7}) and comparing the coefficient of $u_{t}$, it is obtained that $\phi=-1$ and, up to differential consequences, from the coefficients of the remaining terms, it is obtained
\bb\label{1.8.1}
(uf)_{u}-ug+uh=0,
\ee
\bb\label{1.8.2}
(up)_{u}-uq=0,
\ee
\bb\label{1.8.3}
3(uf)_{uuu}-3(ug)_{uu}+2(ud)_{u}+(uh)_{uu}=0,
\ee
\bb\label{1.8.4}
(ur)_{uu}=0,
\ee
\bb\label{1.8.5}
(ub)_{u}=0,
\ee
whose solution is (\ref{1.4}).
\end{proof}

In what follows, the prime $'$ means $\f{d}{d u}$. 
\begin{corollary}(Freire \cite{i1})\label{cor1}
Equation $(\ref{1.2})$ is self-adjoint if and only if
$$
g=h+\f{1}{u}(uf)',\,\,\,\,\,\,d=\f{c_{1}}{u}+\f{1}{u}(uh)',\,\,\,\,\,\,q=\f{1}{u}(up)',\,\,\,\,\,\,r=c_{2}+\f{c_{3}}{u},\,\,\,\,\,\,b=\f{c_{4}}{u},
$$
where $c_{1},\cdots,c_{4}$ are arbitrary constants and $f,\,h$ and $p$ are arbitrary functions of $u$.
\end{corollary}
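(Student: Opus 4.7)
My plan is to obtain Corollary \ref{cor1} as an immediate specialization of Theorem \ref{teo1}, rather than redoing the variational calculation. Equation (\ref{1.2}) is precisely the subcase of (\ref{1.3}) in which the coefficient functions $f,g,r,h,d,p,q,a,b$ are independent of $t$, so I may invoke the theorem directly and then analyze what the conditions (\ref{1.4}) reduce to in this setting.

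The argument has two directions. For necessity, suppose (\ref{1.2}) is self-adjoint. Viewing it as an instance of (\ref{1.3}) with $t$-independent coefficients, Theorem \ref{teo1} supplies smooth functions $c_1(t),\ldots,c_4(t)$ satisfying the four relations in (\ref{1.4}). Here I would exploit that the left-hand sides of these relations are functions of $u$ alone. From $r(u)=c_2(t)+c_3(t)/u$ I would differentiate in $t$ and read off $c_2'(t)+c_3'(t)/u\equiv 0$; comparing powers of $1/u$ forces $c_2$ and $c_3$ to be constants. The equation $b(u)=c_4(t)/u$ gives $c_4$ constant in the same way. Finally, since $h(u)$ is $t$-independent, so is $(uh)'/u$; then $d(u)-(uh)'/u=c_1(t)/u$ forces $c_1$ to be constant. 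The three remaining relations $g=h+(uf)'/u$, $q=(up)'/u$, and the expression for $d$ then read exactly as in the corollary, with the prime meaning $d/du$ (which coincides with $\partial_u$ in this $t$-independent setting).

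For sufficiency, if $f,h,p$ are arbitrary functions of $u$ and constants $c_1,\ldots,c_4$ are prescribed so that the corollary's relations hold, then defining $c_i(t):=c_i$ (constant functions) shows that the hypotheses of Theorem \ref{teo1} are satisfied, hence the equation is self-adjoint.

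There is essentially no obstacle: the only substantive point is the elementary observation that a relation of the form $\varphi(u) = \sum_k c_k(t)\,u^{-k}$ with $\varphi$ independent of $t$ forces each $c_k$ to be constant, applied to the $r$, $b$, and $d$ equations. The other two relations carry over verbatim because their right-hand sides are built from $f,h,p$ alone and involve no free function of $t$.
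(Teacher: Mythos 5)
Your proof is correct and takes essentially the same route as the paper, which states Corollary \ref{cor1} without a separate argument, treating it as an immediate specialization of Theorem \ref{teo1} to $t$-independent coefficients. Your only addition is the elementary (and correct) verification that $t$-independence of $r$, $b$, and $d-\frac{1}{u}(uh)'$ forces $c_{1}(t),\ldots,c_{4}(t)$ to be constants, a step the paper leaves implicit.
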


\begin{corollary}(Bruzón, Gandarias, Ibragimov \cite{ib1}, Freire \cite{i1})\label{cor2}
Equation $(\ref{1.1})$ is self-adjoint if and only if
$$
g=h+\f{1}{u}(uf)',\,\,\,\,\,\,d=\f{c_{1}}{u}+\f{1}{u}(uh)',\,\,\,\,\,\,q=\f{1}{u}(up)',
$$
where $c_{1}$ is an arbitrary constant and $f,\,h$ and $p$ are arbitrary functions of $u$.
\end{corollary}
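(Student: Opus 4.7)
The plan is to derive Corollary \ref{cor1} as a direct specialization of Theorem \ref{teo1}. Equation (\ref{1.2}) coincides with equation (\ref{1.3}) under the restriction that the coefficient functions $f, g, r, h, d, p, q, a, b$ depend on $u$ alone. First, I would invoke Theorem \ref{teo1}, which asserts that self-adjointness of (\ref{1.2}) is equivalent to the five relations (\ref{1.4}) holding for some smooth functions $c_1(t), \ldots, c_4(t)$.

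Next, I would exploit the $t$-independence of the coefficients in (\ref{1.2}) to force each $c_i(t)$ to be constant. From $r = c_2(t) + c_3(t)/u$ with $r$ independent of $t$, differentiating in $t$ yields $c_2'(t) + c_3'(t)/u \equiv 0$, and the linear independence of the functions $1$ and $1/u$ of $u$ forces $c_2$ and $c_3$ to be constants. Applying the same argument to $b = c_4(t)/u$ gives $c_4$ constant, and the relation $d = c_1(t)/u + (uh)_u/u$, with $d$ and $h$ independent of $t$, yields $c_1$ constant.

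The remaining identities $g = h + (uf)_u/u$ and $q = (up)_u/u$ are automatically consistent with the $t$-independent setting since $f, h, p$ are functions of $u$ alone; there the partial derivative $(\cdot)_u$ coincides with the total derivative $(\cdot)'$, and the five conditions reduce to exactly the form stated in Corollary \ref{cor1}. The only potential obstacle is the collapse of the $c_i(t)$ to constants, but this is immediate from the linear independence just noted, so no additional computation of the adjoint operator is required beyond what already appears in the proof of Theorem \ref{teo1}.
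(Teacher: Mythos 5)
There is a mismatch between what you prove and the statement at hand: your argument establishes Corollary \ref{cor1}, which concerns equation (\ref{1.2}), but the statement to be proved is Corollary \ref{cor2}, which concerns equation (\ref{1.1}). Equation (\ref{1.1}) is not simply (\ref{1.3}) with $t$-independent coefficients; it is the special case in which additionally $r\equiv 0$, $a\equiv 0$, $b\equiv 0$, and in which the second-order terms enter with opposite signs, as $-p(u)u_{xx}-q(u)u_{x}^{2}$. Your reduction of the $c_{i}(t)$ to constants via the linear independence of $1$ and $1/u$ is correct and is exactly the mechanism needed, but as written the proof stops at the five-condition form of Corollary \ref{cor1} and never descends to the three-condition form of Corollary \ref{cor2} with its single constant $c_{1}$.

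To close the gap you need three short observations. First, imposing $r\equiv 0$ in the relation $r=c_{2}+c_{3}/u$ forces $c_{2}=c_{3}=0$, and imposing $b\equiv 0$ in $b=c_{4}/u$ forces $c_{4}=0$, by the same linear-independence argument you already used; this is why Corollary \ref{cor2} retains only the three displayed relations and the one constant $c_{1}$. Second, the coefficient $a$ never appears in the self-adjointness conditions (\ref{1.4}) of Theorem \ref{teo1}, so setting $a=0$ imposes nothing. Third, the sign convention: passing from $+pu_{xx}+qu_{x}^{2}$ in (\ref{1.3}) to $-pu_{xx}-qu_{x}^{2}$ in (\ref{1.1}) amounts to replacing $(p,q)$ by $(-p,-q)$, under which the condition $q=\frac{1}{u}(up)_{u}$ is invariant, so the stated relation $q=\frac{1}{u}(up)'$ survives unchanged. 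With these additions your route coincides with the paper's, which likewise obtains all four corollaries as immediate specializations of Theorem \ref{teo1} rather than by recomputing the adjoint operator.
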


\begin{corollary}(Bruzón, Gandarias, Ibragimov \cite{ib1})\label{cor3} Equation
$$
u_{t}+f(u)u_{xxxx}=0
$$
is self-adjoint if and only
\bb\label{1.9}
u_{t}+\f{a}{u}u_{xxxx}=0,
\ee
where $a=const$.
\end{corollary}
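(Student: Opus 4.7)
The plan is to derive Corollary~\ref{cor3} as a direct specialization of Corollary~\ref{cor2}, which has already been obtained (and is itself a specialization of Theorem~\ref{teo1}). The equation $u_t + f(u)u_{xxxx}=0$ is the particular case of~(\ref{1.1}) in which all the coefficient functions except $f$ vanish identically: $g=h=d=p=q=0$. So I would simply substitute these vanishing functions into the self-adjointness conditions of Corollary~\ref{cor2} and read off what $f$ must be.

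Concretely, I would proceed as follows. First, the condition $g = h + \frac{1}{u}(uf)'$ with $g=0$ and $h=0$ collapses to $(uf)'=0$, which integrates immediately to $uf(u)=a$ for some constant $a$, i.e.\ $f(u)=a/u$. This already gives the claimed form~(\ref{1.9}). Next, I would verify consistency of the remaining two conditions: from $d = c_1/u + (uh)'/u$ with $h=d=0$ one forces $c_1=0$, and from $q = (up)'/u$ with $p=q=0$ one gets $0=0$ automatically. Thus no further constraints arise, and the existence of a smooth $f$ satisfying all conditions is equivalent to $f=a/u$.

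For the converse direction I would note that it is already contained in Corollary~\ref{cor2}: once $f=a/u$ and all other coefficients vanish, the five conditions of the corollary are trivially satisfied, so the equation is self-adjoint. Alternatively, one can verify directly by computing $F^{*}|_{v=u}$ for $F=u_t+(a/u)u_{xxxx}$ and checking that $F^{*}|_{v=u}=-F$, i.e.\ that $\phi=-1$ works; but invoking Corollary~\ref{cor2} avoids any redundant computation.

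There is essentially no technical obstacle here; the whole content of the corollary is the algebraic observation that the single equation $(uf)'=0$ cuts the one-parameter family of admissible $f$'s out of Corollary~\ref{cor2}. The only minor point to be careful about is to check that the auxiliary constants ($c_1$ in particular) are indeed forced to be zero, so that the reduction is tight and no spurious lower-order terms are reintroduced.
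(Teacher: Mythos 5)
Your proposal is correct and follows exactly the route the paper intends: Corollary~\ref{cor3} is presented there as an immediate specialization of Theorem~\ref{teo1} (via Corollary~\ref{cor2}), and your substitution $g=h=d=p=q=0$ yielding $(uf)'=0$, hence $f=a/u$ with the forced $c_{1}=0$, is precisely that argument. Nothing is missing.
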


\begin{corollary}(Freire \cite{ijnmp})\label{cor4}
Equation
\bb\label{1.8}u_{t}+a(t,u)u_{x}+b(t,u)=0\ee
is self-adjoint if and only if
$$b=\f{\lambda(t)}{u},$$
for some smooth function $\lambda=\lambda(t)$. 
\end{corollary}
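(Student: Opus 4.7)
The plan is to derive Corollary \ref{cor4} as a direct specialization of Theorem \ref{teo1}. I would begin by observing that the equation $u_{t}+a(t,u)u_{x}+b(t,u)=0$ is precisely the instance of (\ref{1.3}) obtained by switching off all higher-order derivative terms, that is, by setting $f\equiv g\equiv r\equiv h\equiv d\equiv p\equiv q\equiv 0$. Therefore Theorem \ref{teo1} applies and the self-adjointness of the reduced equation is equivalent to imposing the conditions (\ref{1.4}) consistent with these vanishing coefficients.

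Next, I would feed the zero coefficients into (\ref{1.4}) one relation at a time. The conditions $g=h+\frac{1}{u}(uf)_{u}$ and $q=\frac{1}{u}(up)_{u}$ reduce to $0=0$. The condition $d=\frac{c_{1}(t)}{u}+\frac{1}{u}(uh)_{u}=0$ forces $c_{1}(t)\equiv 0$, and $r=c_{2}(t)+\frac{c_{3}(t)}{u}=0$ forces $c_{2}(t)\equiv c_{3}(t)\equiv 0$. The only constraint that retains content is $b=\frac{c_{4}(t)}{u}$, and setting $\lambda(t):=c_{4}(t)$ gives the stated form. Conversely, if $b=\lambda(t)/u$ and all higher-order coefficients vanish, each of (\ref{1.8.1})--(\ref{1.8.5}) is satisfied trivially, so the "if" direction is immediate.

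A structural remark worth recording is that the function $a(t,u)$ never enters (\ref{1.8.1})--(\ref{1.8.5}); hence self-adjointness imposes no restriction whatsoever on the coefficient of $u_{x}$. This is reassuring, since a direct computation of the adjoint in the reduced case gives $F^{\ast}=-v_{t}-a v_{x}+v\,b_{u}$, so that $F^{\ast}|_{v=u}+F=u\,b_{u}+b=(ub)_{u}$, confirming that the sole self-adjointness requirement is exactly $(ub)_{u}=0$, i.e., $b=\lambda(t)/u$.

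The proof is thus essentially bookkeeping and presents no genuine obstacle; if any care is needed, it is in verifying that forcing $d$ and $r$ to vanish identically in $u$ (with $t$ as parameter) indeed kills the arbitrary functions $c_{1}(t),c_{2}(t),c_{3}(t)$ while leaving $c_{4}(t)$ free to serve as $\lambda(t)$. No new estimate, identity, or construction beyond what is already contained in Theorem \ref{teo1} is required.
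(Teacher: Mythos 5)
Your proposal is correct and follows exactly the route the paper intends: Corollary \ref{cor4} is obtained by specializing Theorem \ref{teo1} with $f=g=r=h=d=p=q=0$, which kills $c_{1},c_{2},c_{3}$ and leaves $b=c_{4}(t)/u$ with $c_{4}=\lambda$ free, while $a$ is unconstrained since it never appears in (\ref{1.8.1})--(\ref{1.8.5}). Your added direct computation $F^{\ast}|_{v=u}+F=(ub)_{u}$ is a correct independent check of the same fact, not a different method.
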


\section{An application}

Here we present an application of conservation laws and self-adjoint equations using the recent new conservation theorem due to Ibragimov \cite{ib2}. %hereafter called Ibragimov's Theorem. 

%Although some ideas of the Ibragimov's Theorem were known many years earlier\footnote{See the Ibragimov's comments at the botton of page 138 of \cite{ib5}.} (see exercises to Chapter 5 of Olver \cite{ol}), up to the author's knowledgment, there are not a general result associating a conservation law with every infinitesimal symmetry of an arbitrary differential equation and this approach has actually been used after the corner stone works \cite{ib2,ib4}.  A better discussion can be found in the introduction of the seminal work \cite{ib2}.

%That is the reason why it is proposed here to call the new conservation theorem proved in \cite{ib2} as Ibragimov's Theorem.

Consider the self-adjoint equation
\bb\label{2.1}
u_{t}+f(t)uu_{x}+g(t)u_{xxx}=0.
\ee

Let $F$ be a function such that $F'=f$. A Lie point symmetry generator of (\ref{2.1}) is
\bb\label{2.2}
X=F(t)\f{\p}{\p x}+\f{\p}{\p u}.
\ee
From \cite{ib2}, a conserved vector for equation (\ref{2.1}) and its adjoint
\bb\label{2.2'} 
v_{t}+f(t)uv_{x}+g(t)v_{xxx}=0
\ee
is $C=(C^{0},C^{1})$, where
\bb\label{2.3}
\ba{lcl}
C^{0}&=&\ds{\tau {\cal L}+W\,\f{\p {\cal L}}{\p u_{t}}},\\
\\
C^{1}&=&\ds{\xi {\cal L}+W\left[\f{\p {\cal L}}{\p u_{x}}-D_{x}\f{\p {\cal L}}{\p u_{xx}}+D_{x}^{2}\f{\p {\cal L}}{\p u_{xxx}}\right]}\\
\\
&&\ds{+D_{x}(W)\left[\f{\p {\cal L}}{\p u_{xx}}-D_{x}\f{\p {\cal L}}{\p u_{xxx}}\right]+D_{x}^{2}(W)\f{\p {\cal L}}{\p u_{xxx}}},
\ea
\ee
$W=\eta-\tau u_{t}-\xi u_{x}$ and ${\cal L}=v(u_{t}+f(t)uu_{x}+g(t)u_{xxx})$.

Substituting $\xi=F(t),\,\tau=0,\,\eta=1$ and $W=1-F(t)u_{x}$ into (\ref{2.3}), it is obtained
\bb\label{2.3'}
\ba{lcl}
C^{0}&=&v(1-F(t)u_{x}),\\
\\
C^{1}&=&F(t)vu_{t}+f(t)vu+g(t)v_{xx}-F(t)g(t)u_{x}v_{xx}+F(t)g(t)v_{x}u_{xx}.
\ea
\ee

Setting $v=u$ in (\ref{2.3'}), the components of the vector become
$$C^{0}=u-F(t)D_{x}\left(\f{u^{2}}{2}\right),\,\,\,\,\,C^{1}=f(t)u^{2}+F(t)D_{t}\left(\f{u^{2}}{2}\right)+g(t)u_{xx}.$$
Thus
$$D_{t}C^{0}+D_{x}C^{1}=D_{t}(u)-D_{x}\left(f(t)\f{u^{2}}{2}\right)-F(t)D_{t}D_{x}\left(\f{u^{2}}{2}\right)+F(t)D_{t}D_{x}\left(\f{u^{2}}{2}\right)+D_{x}(f(t)u^{2}+g(t)u_{xx})$$
and then, 
\bb\label{c1}
C=(u,f(t)u^{2}/2+g(t)u_{xx})
\ee 
provides a conserved field for equation (\ref{2.1}). In particular 
\bb\label{c2}
C=(u,f(t)u^{2}/2)
\ee
is a conserved vector for the time dependent inviscid Burgers equation 
\bb\label{ibe1}
u_{t}+f(t)uu_{x}=0,
\ee
$D_{t}C^{0}+D_{x}C^{1}=0$, where
\bb\label{c3}
C^{0}=u,\,\,\,\,\,C^{1}=\f{u^{2}}{2},
\ee
provides a conservation law for the inviscid Burgers equation
\bb\label{ibe2}
u_{t}+uu_{x}=0
\ee
and 
\bb\label{c4}
C=(u,u^{2}/2+u_{xx})
\ee
is a conserved vector for the KdV equation
\bb\label{kdv}
u_{t}+uu_{x}+u_{xxx}=0.
\ee

Further examples can be found in \cite{ib1,ijnmp,i1,ib2,ib3}.

\section{Conclusion}

Theorem \ref{teo1} generalizes results on self-adjoint equations previously obtained in \cite{ib1,ijnmp,i1}, namely, corollaries \ref{cor1}, \ref{cor2}, \ref{cor3} and \ref{cor4}. Theorem \ref{teo1} together the new conservation theorem proved by Ibragimov \cite{ib2} can be used for establishing conservation laws for equations (\ref{1.2}), since the coefficients obey (\ref{1.4}). Many applications can be found in \cite{ib1, ijnmp, i1,ib2,ib3,jk3}. 

The conserved field (\ref{c1}) provides an infinite number of conservation laws for equation (\ref{2.1}) parametrized by the functions $f(t)$ and $g(t)$. Particular cases of these conservation laws are well known, see references cited above. 

To the best of the author's knowledgment, the conserved vector (\ref{c3}) for the Korteweg-de Vries equation, using (\ref{2.3}) (or (\ref{2.3'})), was first obtained by Ibragimov in his fundamental work \cite{ib2}, see \cite{i1}. Taking $f(t)=1$ in (\ref{2.1}), the vector $C=(u,u^{2}/2+g(t)u_{xx})$ is a conserved vector for the time dependent KdV equation
$$u_{t}+uu_{x}+g(t)u_{xxx}=0.$$
Thus $Div(C)=0$ provides conservation laws for equations listed in the Case 1 of \cite{jk2}. Hence these conservation laws complement the results obtained in \cite{jk3}.

With regard to the inviscid Burgers equation (\ref{ibe2}), the conserved vector (\ref{c3}) is a particular case of the conserved fields obtained in \cite{ijnmp}. However, up to the author's knowledgment, the present note is the first work that the general conserved vector (\ref{c1}) is found using (\ref{2.3}), as well as its particular vector (\ref{c2}), wich provides a conservation law for (\ref{ibe1}).

\end{document}